\documentclass[10pt,journal]{IEEEtran}
\usepackage{amsmath,amssymb,amsfonts}
\usepackage{cite}
\usepackage{graphicx}
\usepackage{booktabs}
\usepackage{array}
\usepackage{url}
\usepackage{balance}
\usepackage[caption=false,font=footnotesize]{subfig}
\usepackage{microtype}

\graphicspath{{./}}

\newcommand{\cV}{\mathcal{V}}
\newcommand{\cE}{\mathcal{E}}
\newcommand{\cJ}{\mathcal{J}}
\newcommand{\cT}{\mathcal{T}}

\newcommand{\bbE}{\mathbb{E}}
\newcommand{\I}{\mathrm{I}}
\newcommand{\R}{\mathsf{R}}
\newcommand{\Rs}{\mathsf{R}_{S}}
\newcommand{\D}{\mathsf{D}}
\newcommand{\B}{\mathsf{B}}

\newcommand{\eps}{\varepsilon}

\newtheorem{prop}{Proposition}

\newtheorem{cor}{Corollary}

\usepackage{tikz}
\usetikzlibrary{arrows.meta, positioning,calc,fit,backgrounds}

\begin{document}

\title{Sparse In-Network Learning via Shortest-Path Backpropagation and Finite-Rate Gating}

\author{Mohammad Reza Deylam Salehi,~\IEEEmembership{Member,~IEEE}%
\thanks{M. R. Deylam Salehi is a graduate member, Nice, France (e-mail: reza.deylam@ieee.org).}}

\maketitle

\begin{abstract}
In-network learning (INL) trains distributed neural modules by exchanging latent activations and backpropagated errors over a communication graph. This letter proposes Dijkstra-pruned INL (D-INL), which removes non-tree links by retaining a capacity-aware shortest-path tree rooted at the fusion node. To balance sparsity and predictive information, local routing (or aggregation) is modeled as a finite-rate stochastic gate with rate $R_g=I(Z; T)$. We derive a rate-distortion-generalization bound and validate the method on a reproducible distributed-classification experiment, where D-INL reduces training exchange by $70.4\%$ while preserving accuracy within the standard deviation of dense INL. Adding finite-rate regularization further reduces the estimated latent rate by $45.7\%$ relative to unregularized Dijkstra INL.
\end{abstract}

\begin{IEEEkeywords}
In-network learning, edge intelligence, shortest-path tree, sparse backpropagation, finite-rate gating, rate-distortion.
\end{IEEEkeywords}

\section{Introduction}
\label{sec:intro}

In many edge, wireless, and cyber-physical systems, measurements relevant to a prediction task are distributed across access points, sensors, vehicles, or satellites. In-network learning (INL) addresses this setting by placing neural modules at network nodes, where intermediate nodes fuse latent representations during inference and propagate error signals during training~\cite{moldoveanu2023}. Unlike federated learning (FL)~\cite{mcmahan2017}, INL preserves the distributed nature of both training and inference.

A limitation of dense INL is that a general network may contain many feasible edges. If all of them are used for backpropagation, the training cost scales with the number and width of active links. Moreover, dense gradient coupling can increase the effective dependence of the learned parameters on the training data. Thus, the best training topology is not necessarily the densest one. In other words, a sparse graph can reduce communication and act as a structural regularizer.

This letter combines two ingredients. First, it uses Dijkstra's shortest-path tree (SPT) \cite{dijkstra1959} to retain low-cost inference/backpropagation routes from data nodes to the fusion node. Second, motivated by finite-rate gating in mixture-of-experts (MoE) models \cite{khalesi2026, salehi2026expert}, it treats routing or aggregation decisions as stochastic channels. The resulting design is governed by a minimum trade-off between communication cost, rate-distortion loss, and generalization.

The \textbf{contributions} are: (i) a Dijkstra-pruned INL architecture that removes non-tree edges from training exchange; (ii) a finite-rate topology objective based on $R_g=\I(Z;T)$; (iii) a population-risk bound for sparse INL; and (iv) a small numerical study showing the accuracy-bandwidth-rate trade-off.

\begin{figure*}[t]
\centering
\begin{tikzpicture}[
    font=\scriptsize,
    node distance=7mm and 8mm,
    block/.style={draw, rounded corners, thick, align=center,
                  minimum height=6mm, minimum width=13mm, fill=gray!8},
    smallblock/.style={draw, rounded corners, thick, align=center,
                  minimum height=5mm, minimum width=10mm, fill=gray!5},
    arrow/.style={-{Stealth[length=2mm]}, thick},
    back/.style={-{Stealth[length=2mm]}, thick, dashed}
]

\node[block] (s1) {$X_1$\\sensor};
\node[block, below=of s1] (s2) {$X_2$\\sensor};
\node[block, below=of s2] (s3) {$X_J$\\sensor};

\node[smallblock, right=of s1] (e1) {$f_1$};
\node[smallblock, right=of s2] (e2) {$f_2$};
\node[smallblock, right=of s3] (e3) {$f_J$};

\node[block, right=13mm of e2] (gate) {finite-rate\\gate\\$I(Z;T)\le R_g$};

\node[block, above right=7mm and 12mm of gate] (r1) {relay\\module};
\node[block, below right=7mm and 12mm of gate] (r2) {relay\\module};

\node[block, right=33mm of gate] (fusion) {fusion node\\$q_W(Y|Z)$};

\draw[arrow] (s1) -- (e1);
\draw[arrow] (s2) -- (e2);
\draw[arrow] (s3) -- (e3);

\draw[arrow] (e1) -- (gate);
\draw[arrow] (e2) -- (gate);
\draw[arrow] (e3) -- (gate);

\draw[arrow] (gate) -- node[above,sloped] {$T$} (r1);
\draw[arrow] (gate) -- node[below,sloped] {$T$} (r2);

\draw[arrow] (r1) -- (fusion);
\draw[arrow] (r2) -- (fusion);

\draw[back] ([yshift=1mm]fusion.west) to[bend left=18]
    node[above] {backprop only on SPT} ([yshift=1mm]gate.east);

\node[draw, dashed, rounded corners, fit=(s1)(s3)(fusion), inner sep=3mm,
      label={[font=\scriptsize]above:Dijkstra-pruned in-network learning}] {};

\end{tikzpicture}
\caption{System model of D-INL. Distributed sensors produce local representations, a finite-rate gate controls routing/aggregation, and backpropagation is restricted to the shortest-path tree.}
\label{fig:system_model_tikz}
\vspace{-1mm}
\end{figure*}

\section{System Model and Sparse Backpropagation}
\label{sec:sys-mod}

Fig.~\ref{fig:system_model_tikz} summarizes the proposed setting. Let $G=(\cV,\cE,C)$ be a directed acyclic graph with nodes $\cV=\{1,\ldots,N\}$, edge capacities $C_e$, and fusion node $r$. A subset $\cJ\subset\cV$ observes distributed data $X_j$, $j\in\cJ$, and the fusion node predicts $Y$. Each node $v$ contains a neural module $f_v(\cdot;W_v)$. For a selected training subgraph $\cT\subseteq\cE$, the latent state at node $v$ is
\begin{equation}
Z_v=f_v\big(X_v,\{Z_u:(u,v)\in\cT\};W_v\big),
\label{eq:latent}
\end{equation}
where $X_v=\emptyset$ for pure relays. The fusion node outputs $q_{W_r}(Y|\{Z_u:(u,r)\in\cT\})$ and is trained with bounded loss $\ell\in[0,1]$. For a training set $S$ of size $m$,
\begin{align}
\R(W_\cT)&=\bbE[\ell(\hat Y_\cT,Y)],\quad
\Rs(W_\cT)=\frac{1}{m}\sum_{i=1}^{m}\bbE[\ell(\hat y_{\cT,i},y_i)],
\end{align}
where expectations include stochastic gates or bottlenecks.

If edge $e$ carries $d_e$ real scalars and each scalar is represented with $s$ bits, the per-epoch training exchange for $q$ samples is approximated by
\begin{equation}
\B_\cT=2sq\sum_{e\in\cT} d_e,
\label{eq:bandwidth}
\end{equation}
where the factor two accounts for forward activations and backward errors. For the full training graph $\cE_{\rm tr}$, $\B_{\rm full}=2sq\sum_{e\in\cE_{\rm tr}}d_e$.

Let $\mathcal{P}_{\cT}(v)$ and $\mathcal{C}_{\cT}(v)$ denote the active parents and children of node $v$. During the backward pass, node $v$ receives error vectors only from $\mathcal{C}_{\cT}(v)$ and computes its local gradient using the chain rule over the active computation graph. Thus, removing an edge $(u,v)$ eliminates both the forward activation $Z_u\to v$ and the reverse error signal $v\to u$. This is different from merely masking a weight inside a central neural network: the removed edge also avoids an actual network transmission. Consequently, sparsity has a direct communication meaning.

The subgraph must nevertheless preserve an information path from every data node to the fusion node. Otherwise, the corresponding observation is excluded during inference. The design problem is therefore a constrained topology-learning problem: choose a connected training subgraph with small exchange cost while retaining enough latent information to predict $Y$.

\subsection{Dijkstra-Pruned Topology}

Each directed edge $e=(u,v)$ is assigned the additive cost
\begin{equation}
\omega_e=\alpha\frac{s d_e}{C_e+\eps}+\beta\tau_e+\gamma(1-\rho_e),
\label{eq:edgecost}
\end{equation}
where $\tau_e$ is latency, $\rho_e\in[0,1]$ is reliability, and $\alpha,\beta,\gamma\ge0$. Running Dijkstra's algorithm \cite{dijkstra1959} on the reverse graph from $r$ gives a minimum-cost path from each data node to $r$ in the original graph. The union of these paths defines $\cT_D$. Training is then performed only on $\cT_D$: forward activations are sent on tree edges and error vectors are propagated backward on the reverse tree. Non-tree edges do not transmit errors. For equal message widths,
\begin{equation}
\frac{\B_{\cT_D}}{\B_{\rm full}}=\frac{|\cT_D|}{|\cE_{\rm tr}|}.
\label{eq:reduction}
\end{equation}
The graph step costs $O(|\cE|\log |\cV|)$ and can be recomputed when link capacities change.

\begin{figure}[t]
\centering
\begin{tikzpicture}[
    font=\scriptsize,
    sensor/.style={circle, draw, thick, minimum size=5.5mm, fill=blue!8},
    relay/.style={circle, draw, thick, minimum size=6mm, fill=orange!12},
    fusion/.style={circle, draw, very thick, minimum size=7mm, fill=green!12},
    cand/.style={-{Stealth[length=1.6mm]}, dashed, gray!65, line width=0.35pt},
    tree/.style={-{Stealth[length=1.8mm]}, black, very thick},
    labeltxt/.style={font=\scriptsize}
]

\node[sensor] (s1) at (0,  1.25) {$S_1$};
\node[sensor] (s2) at (0,  0.75) {$S_2$};
\node[sensor] (s3) at (0,  0.25) {$S_3$};
\node[sensor] (s4) at (0, -0.25) {$S_4$};
\node[sensor] (s5) at (0, -0.75) {$S_5$};
\node[sensor] (s6) at (0, -1.25) {$S_6$};

\node[relay] (r1) at (2.1,  0.85) {$R_1$};
\node[relay] (r2) at (2.1,  0.00) {$R_2$};
\node[relay] (r3) at (2.1, -0.85) {$R_3$};

\node[fusion] (f) at (4.2,0) {$F$};

\node[labeltxt] at (0,1.75) {sensors};
\node[labeltxt] at (2.1,1.75) {relays};
\node[labeltxt] at (4.2,1.75) {fusion};

\foreach \s in {s1,s2,s3,s4,s5,s6}{
    \foreach \r in {r1,r2,r3}{
        \draw[cand] (\s) -- (\r);
    }
}

\foreach \s in {s1,s2,s3,s4,s5,s6}{
    \draw[cand] (\s) -- (f);
}

\foreach \r in {r1,r2,r3}{
    \draw[cand] (\r) -- (f);
}

\draw[tree] (s1) -- (r1);
\draw[tree] (s2) -- (r1);
\draw[tree] (s3) -- (r2);
\draw[tree] (s4) -- (r1);
\draw[tree] (s5) -- (r1);
\draw[tree] (s6) -- (r2);
\draw[tree] (r1) -- (f);
\draw[tree] (r2) -- (f);

\draw[cand] (0.2,-1.85) -- ++(0.55,0);
\node[anchor=west] at (0.72,-1.85) {candidate links};

\draw[tree] (2.4,-1.85) -- ++(0.55,0);
\node[anchor=west] at (2.9,-1.85) {SPT links};

\end{tikzpicture}
\caption{Candidate INL topology and Dijkstra SPT. Dashed edges are feasible training links; solid edges are retained for forward activation exchange and backward error propagation.}
\label{fig:topology}
\vspace{-1mm}
\end{figure}

\begin{prop}[SPT routing optimality]
\label{prop-spt-rout}
Assume $\omega_e\ge 0$ for all $e\in\cE$, and suppose each data node $j\in\cJ$ sends information to the fusion node through a single directed path $P_j$. If routing cost is counted per source path as
\begin{align}
J(\{P_j\})=\sum_{j\in\cJ}\sum_{e\in P_j}\omega_e,
\end{align}
then the reverse-Dijkstra construction produces paths $\{P_j^D\}_{j\in\cJ}$ satisfying
\begin{align}
\sum_{j\in\cJ}\sum_{e\in P_j^D}\omega_e=\sum_{j\in\cJ} d_\omega(j,r)
\le \sum_{j\in\cJ}\sum_{e\in P_j}\omega_e
\end{align}
for any feasible single-path routing $\{P_j\}$. Moreover, $\cT_D=\cup_{j\in\cJ}P_j^D$ preserves at least one information path from every data node to the fusion node.
\end{prop}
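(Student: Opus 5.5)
The plan rests on two facts. First, the objective $J$ separates across sources. Second, Dijkstra's algorithm with nonnegative weights returns exact shortest-path distances.

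\textbf{Step 1 (correctness of reverse Dijkstra).} Let $G^{\rm rev}$ be $G$ with every edge reversed, keeping the same weights $\omega_e$. A directed path from $j$ to $r$ in $G$ corresponds one-to-one to a path from $r$ to $j$ in $G^{\rm rev}$, and the two have the same total weight. Hence $d_\omega(j,r)$ in $G$ equals the distance from $r$ to $j$ in $G^{\rm rev}$.

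Since $\omega_e\ge 0$, the standard invariant of Dijkstra's algorithm holds. Each time a vertex is extracted from the priority queue, its label equals its true distance from the source. The usual induction shows this: if the first deviation of a shorter path left the settled set earlier, then by nonnegativity that path could not be shorter. I would cite \cite{dijkstra1959} for this rather than reprove it.

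Following predecessor pointers from $j$ in $G^{\rm rev}$ back to $r$, and reversing the result, gives a path $P_j^D$ in $G$. Its weight is $\sum_{e\in P_j^D}\omega_e=d_\omega(j,r)$. This holds for every $j\in\cJ$ that is reachable. Summing over $j$ gives the equality in the statement.

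\textbf{Step 2 (optimality by separability).} Take any feasible single-path routing $\{P_j\}$. Each $P_j$ is a directed $j\to r$ path, so by the definition of shortest-path distance, $\sum_{e\in P_j}\omega_e\ge d_\omega(j,r)$. The cost $J$ is a sum of per-source terms with no coupling between them. In particular, shared edges are counted once per source path, which is exactly the hypothesis of the proposition. Summing the per-source inequalities over $j\in\cJ$ therefore gives the stated bound.

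I should remark that this is precisely where the per-path accounting matters. If edge reuse were charged only once, the problem would become a Steiner-tree problem, and the SPT would not be optimal in general.

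\textbf{Step 3 (connectivity).} Feasibility of some routing $\{P_j\}$ implies that every $j\in\cJ$ can reach $r$ in $G$. So $d_\omega(j,r)<\infty$, and Dijkstra assigns $j$ a finite label with a predecessor chain ending at $r$. The path $P_j^D$ lies in $\cT_D=\cup_{j}P_j^D$ by construction, so $\cT_D$ contains a $j\to r$ path for every data node.

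It remains to note that $\cT_D$ is a tree, so the returned paths are consistent. Each settled vertex has a unique predecessor, so the predecessor pointers form an in-arborescence rooted at $r$. I will also note that acyclicity of $G$ is not needed for this argument.

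The main obstacle is purely expository: handling ties and unreachable nodes. Ties are resolved by fixing any predecessor, and the resulting path is still of minimal weight. Unreachable nodes are excluded by the feasibility assumption. Beyond that, the proof is the standard Dijkstra invariant combined with termwise summation.
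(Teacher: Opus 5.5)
Your proposal is correct and follows the same route as the paper. Both arguments rest on Dijkstra returning exact distances $d_\omega(j,r)$ under nonnegative weights, both sum the per-source inequalities using the separability of $J$, and both note that the union of the selected paths connects every data node to $r$; you add details the paper leaves implicit (the edge-reversal correspondence, reachability from feasibility, the predecessor-arborescence structure, and tie-breaking), but the argument is essentially the same.
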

\begin{IEEEproof}
Dijkstra's algorithm returns the minimum additive path cost $d_\omega(j,r)$ from every node $j$ to $r$ when all edge weights are nonnegative. Summing these per-source inequalities over $j\in\cJ$ gives the result. The union of the selected paths is connected from all data nodes to $r$.
\end{IEEEproof}

Proposition~\ref{prop-spt-rout} gives optimality for the path-additive routing component. It does not claim global optimality for the directed Steiner-tree objective in which shared edges are counted only once; for that harder objective, D-INL is a lightweight shortest-path approximation.

\section{Finite-Rate Topology Trade-off}
\label{sec:fin-rat}

Shortest paths alone may discard predictive information. Let $Z$ be the latent input to a local route/aggregation gate and let $T$ be the selected route, expert, or aggregation mode. The achieved gating rate is $R_g=\I(Z;T)$. For fixed topology $\cT$, define
\begin{equation}
\D_{\cT}(R_g)=\inf_{P(T|Z):\I(Z;T)\le R_g}\bbE\big[\ell(\hat Y_{\cT,T},Y)\big].
\label{eq:rd}
\end{equation}
Following the rate-distortion view used for finite-rate gates \cite{khalesi2026} and classical rate-distortion computation \cite{blahut1972}, this is the best achievable prediction loss when the local gate transmits at most $R_g$ nats per sample. The proposed training objective is
\begin{equation}
\min_{\cT,W}\;\Rs(W_\cT)+\lambda R_g(W_\cT)+\mu \B_\cT,
\label{eq:objective}
\end{equation}
where the Dijkstra weights implement the bandwidth term and the finite-rate regularizer implements the $R_g$ term. In practice, $R_g$ can be estimated by discrete-gate empirical mutual information or, for Gaussian bottlenecks, by the average KL divergence to a prior.

If the gate output must cross a physical link or a logical resource budget with capacity $C_g$ nats/sample, then $R_g\le C_g$. Since $\D_{\cT}(R)$ is non-increasing in $R$, operating far below capacity may increase the distortion term, whereas operating at high rate may increase communication and data dependence. Hence the finite-rate term tunes the communication-generalization trade-off of sparse INL rather than selecting a tree only by hop count.

\begin{prop}
\label{prop-risk}
Let $W_\cT$ be learned from $S$ on a sparse topology $\cT$. Suppose the empirical procedure satisfies
\begin{equation}
\bbE[\Rs(W_\cT)]\le \D_\cT(R_g)+\delta_m,
\label{eq:empcond}
\end{equation}
where $\delta_m\ge0$ includes optimization and empirical rate-distortion mismatch. Then
\begin{equation}
\bbE[\R(W_\cT)]\le \D_\cT(R_g)+\delta_m+\sqrt{\frac{2}{m}\I(S;W_\cT)}.
\label{eq:bound}
\end{equation}
\end{prop}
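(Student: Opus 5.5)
The plan is to split the expected population risk into the expected empirical risk plus the expected generalization gap:
\[
\bbE[\R(W_\cT)]=\bbE[\Rs(W_\cT)]+\bbE[\R(W_\cT)-\Rs(W_\cT)].
\]
The first term is controlled directly by the hypothesis \eqref{eq:empcond}, which contributes $\D_\cT(R_g)+\delta_m$. The second term is the expected generalization gap of a learning algorithm, viewed as a randomized map $S\mapsto W_\cT$. For it I will invoke the mutual-information generalization bound of Xu and Raginsky (building on Russo and Zou), which states that $|\bbE[\R(W)-\Rs(W)]|\le\sqrt{2\sigma^2\I(S;W)/m}$ whenever the per-sample loss is $\sigma$-sub-Gaussian under the data distribution for every fixed $W$.

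The key steps, in order, are as follows.

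\textbf{Step 1: define the per-sample loss.} For fixed parameters $w$, set $\tilde\ell(w,(x,y))=\bbE[\ell(\hat y_{\cT},y)]$. Here the expectation is over the internal stochasticity of the gates and bottlenecks given the input, so that $\R$ and $\Rs$ are exactly the population and empirical averages of $\tilde\ell$. This matches the convention in the definitions of $\R$ and $\Rs$, where expectations include the stochastic gates.

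\textbf{Step 2: obtain sub-Gaussianity.} Since $\ell\in[0,1]$, we also have $\tilde\ell\in[0,1]$. Hoeffding's lemma then shows that $\tilde\ell(w,Z)$ is $\tfrac12$-sub-Gaussian for every $w$.

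\textbf{Step 3: apply the generalization bound.} This requires the Donsker--Varadhan variational representation of $\mathrm{KL}(P_{S,W}\|P_S\otimes P_W)=\I(S;W)$, applied to the function $\lambda\,(\Rs(w)-\R(w))$ evaluated on an independent copy of the pair. Because $\Rs(w)$ is an average of $m$ i.i.d.\ bounded terms, it is $\tfrac{1}{2\sqrt m}$-sub-Gaussian. Optimizing over $\lambda$ then yields
\[
\bbE[\R-\Rs]\le\sqrt{\I(S;W_\cT)/(2m)}.
\]
This is in fact tighter than the stated $\sqrt{2\I(S;W_\cT)/m}$, so the claim follows a fortiori, even under the cruder sub-Gaussian constant $\sigma=1$ that one obtains from the range bound alone.

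\textbf{Step 4: combine.} Adding the two pieces gives \eqref{eq:bound}.

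The main obstacle is making Step 3 rigorous in this setting, where the mutual information is between $S$ and the learned $W_\cT$ and the loss contains extra randomness from the gates. One must check two things:
\begin{enumerate}
\item The gate noise is independent of $S$ given $W_\cT$, so that averaging it into $\tilde\ell$ preserves the Markov structure $S\to W_\cT\to\tilde\ell$.
\item The topology $\cT$ is fixed, or is otherwise counted as part of $W_\cT$. Otherwise $\I(S;W_\cT)$ would not capture the data dependence of a data-driven choice of the tree.
\end{enumerate}
For item 2, I would treat $\cT$ as deterministic given the capacities, which holds for the Dijkstra construction since its edge costs \eqref{eq:edgecost} do not depend on $S$, and I would note this explicitly. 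If $\I(S;W_\cT)=\infty$, the bound is vacuous and holds trivially.
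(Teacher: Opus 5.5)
Your proof is correct and follows the paper's route. You split $\bbE[\R(W_\cT)]$ into $\bbE[\Rs(W_\cT)]$ plus the generalization gap, bound the first term by \eqref{eq:empcond}, and bound the second by the Xu--Raginsky mutual-information inequality, with the gate randomness averaged into the per-sample loss (the paper phrases this as the Markov chain $S\to W_\cT\to L$). Your extra observation also checks out: Hoeffding's lemma gives $\sigma=1/2$ and hence the sharper penalty $\sqrt{\I(S;W_\cT)/(2m)}$, whereas the paper's $\sqrt{2\I(S;W_\cT)/m}$ corresponds to the cruder choice $\sigma=1$.
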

\begin{IEEEproof}
Use $a\le b+|a-b|$ with $a=\bbE[\R(W_\cT)]$ and $b=\bbE[\Rs(W_\cT)]$, apply \eqref{eq:empcond}, and then use the mutual-information generalization inequality for bounded losses \cite{xu2017}. The stochastic gate is internal to the learned model, so the single-sample test loss obeys the Markov chain $S\to W_\cT\to L$, yielding the penalty $\sqrt{2\I(S;W_\cT)/m}$.
\end{IEEEproof}

\begin{cor}[near-saturation]
\label{cor-saturation}
Assume $R_g\le C_g$ and $\D_\cT(R)$ is $L_\cT$-Lipschitz on $[C_g-\eta,C_g]$. If the trained gate is $\eta$-saturated, i.e., $0\le C_g-R_g\le\eta$, then
\begin{equation}
\bbE[\R(W_\cT)]\le \D_\cT(C_g)+L_\cT\eta+\delta_m+\sqrt{\frac{2}{m}\I(S;W_\cT)} .
\label{eq:sat_bound}
\end{equation}
\end{cor}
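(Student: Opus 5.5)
The plan is to start from Proposition~\ref{prop-risk} and upper bound $\D_\cT(R_g)$ by $\D_\cT(C_g)$ plus a slack term, using monotonicity and Lipschitz continuity of the rate-distortion function. The hypotheses of Proposition~\ref{prop-risk} are assumed to hold here, namely \eqref{eq:empcond} with the achieved rate $R_g$. That proposition then gives
\begin{equation*}
\bbE[\R(W_\cT)]\le \D_\cT(R_g)+\delta_m+\sqrt{\tfrac{2}{m}\I(S;W_\cT)},
\end{equation*}
so it only remains to show that $\D_\cT(R_g)\le \D_\cT(C_g)+L_\cT\eta$.

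For this step I would use the $\eta$-saturation hypothesis, which says $C_g-\eta\le R_g\le C_g$. Both $R_g$ and $C_g$ therefore lie in the interval $[C_g-\eta,C_g]$, where $\D_\cT$ is assumed to be $L_\cT$-Lipschitz. Lipschitz continuity gives
\begin{equation*}
\D_\cT(R_g)\le \D_\cT(C_g)+L_\cT|C_g-R_g|\le \D_\cT(C_g)+L_\cT\eta .
\end{equation*}
Monotonicity is not actually needed for this bound. Still, it is worth noting that, because $\D_\cT$ is non-increasing (the feasible set in \eqref{eq:rd} grows with $R$), we have $\D_\cT(R_g)\ge\D_\cT(C_g)$. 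The Lipschitz term is therefore exactly the price paid for operating below capacity, and it cannot be dropped in general. Substituting this bound into the result of Proposition~\ref{prop-risk} yields \eqref{eq:sat_bound}.

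The only delicate point is bookkeeping rather than analysis. The generalization term $\I(S;W_\cT)$ and the slack $\delta_m$ must refer to the same trained model at rate $R_g$. This holds because we never change the model: we only re-express the distortion benchmark at $C_g$. I also need $\D_\cT(C_g)$ to be finite, which is automatic since $\ell\in[0,1]$.

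I do not expect a genuine obstacle. The one thing to check is that the Lipschitz interval contains $R_g$, and this is exactly what the saturation condition guarantees.
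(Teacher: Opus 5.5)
Your proposal is correct and follows the paper's proof: you bound $\D_\cT(R_g)\le\D_\cT(C_g)+L_\cT(C_g-R_g)\le\D_\cT(C_g)+L_\cT\eta$ on the Lipschitz interval and substitute into Proposition~\ref{prop-risk}. Your side remark is also right: the paper cites monotonicity, but since $C_g-R_g\ge 0$ by hypothesis, the Lipschitz property alone gives the upper bound.
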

\begin{IEEEproof}
Since $\D_\cT$ is non-increasing and $L_\cT$-Lipschitz, $\D_\cT(R_g)\le \D_\cT(C_g)+L_\cT(C_g-R_g)\le \D_\cT(C_g)+L_\cT\eta$. Substitution in Proposition~\ref{prop-risk} gives \eqref{eq:sat_bound}.
\end{IEEEproof}

The bounds separate the design trade-off. Pruning reduces $\B_\cT$ and may reduce $\I(S;W_\cT)$ by lowering the number of communicating modules. Too much pruning or too small $R_g$ increases $\D_\cT(R_g)$. Corollary~\ref{cor-saturation} further shows that when the learned gate operates close to the available budget, the risk approaches the capacity-limited distortion $\D_\cT(C_g)$ up to the saturation gap, finite-sample term, and information-generalization penalty.

\noindent\textbf{D-INL procedure.} Estimate link attributes $(C_e,\tau_e,\rho_e)$ and widths $d_e$; compute \eqref{eq:edgecost}; run reverse-Dijkstra from $r$; form $\cT_D$ as the union of selected paths; train INL only on $\cT_D$ with \eqref{eq:objective}; optionally refresh edge costs using validation loss or gradient energy while preserving connectivity from every data node to $r$.

The procedure is deliberately lightweight. The SPT is computed once per network-state update, while neural optimization runs for many mini-batches on the selected graph. If a link degrades, only the affected shortest paths need to be recomputed. This makes the method suitable for edge networks where capacities and latencies vary more slowly than stochastic-gradient iterations. In contrast, exhaustive search over all directed training subgraphs is combinatorial.

\subsection{Communication and Complexity Scaling}

Let $L$ denote the number of trainable layers distributed across the graph, and let $c_v$ be the local floating-point cost of node $v$ per sample. D-INL does not change $\sum_v c_v$ for nodes retained in the tree, but it changes the inter-node exchange. If all active links use the same message width $d$, the training-exchange gain is
\begin{equation}
G_B=1-\frac{\B_{\cT_D}}{\B_{\rm full}}=1-\frac{|\cT_D|}{|\cE_{\rm tr}|}.
\label{eq:gain}
\end{equation}
When widths differ, the same expression holds after replacing edge counts with $\sum_e d_e$. The method therefore targets the networking bottleneck rather than only the arithmetic cost of the neural modules.

The SPT constraint also simplifies scheduling. Since each non-fusion node has a unique next hop in the selected tree, activation forwarding and error propagation can be implemented as two pipelined waves. Intermediate nodes store only the activations required for their tree children and need not synchronize gradients over all feasible outgoing links. The finite-rate penalty acts as a soft compression layer on the selected-tree messages: $\mu$ controls the graph-level exchange budget, whereas $\lambda$ controls the information carried by each learned message.

\section{Numerical Results, and Intuitions}
\label{numeric}

\subsection{Numerical results:}

We evaluate D-INL on a synthetic distributed binary-classification task. Six sensing nodes observe noisy two-dimensional projections of a common latent vector; three relay nodes and one fusion node form the communication graph in Fig.~\ref{fig:topology}. The candidate graph has $27$ training-active directed edges: every sensor connects to every relay and directly to the fusion node, and each relay connects to the fusion node. Dijkstra pruning, with costs inversely proportional to link capacities, returns a tree with $8$ active edges: $(0,6),(1,6),(2,7),(3,6),(4,6),(5,7),(6,9),$ and $(7,9)$. Each sensor has a small encoder, relays aggregate incoming messages, and the fusion node is a two-layer classifier. We train on $120$ samples, validate on $120$, and test on $1000$ samples, averaged over six random seeds. The finite-rate version adds a Gaussian bottleneck KL penalty to the sensor messages and reports the average KL as an estimated latent rate.

\begin{table}[t]
\centering
\caption{Synthetic distributed classification results over six seeds. $\B$ is Mbit per epoch, Rate is nats/sample, and accuracy is in percent.}
\label{tab:synthetic_results}
\scriptsize
\setlength{\tabcolsep}{2.4pt}
\resizebox{\columnwidth}{!}{%
\begin{tabular}{lcccccc}
\toprule
Scheme & Edges & Params & $\B$ & Rate & Acc. & NLL \\
\midrule
Dense INL & 27 & 3726 & 0.622 & $19.51\!\pm\!7.06$ & $73.50\!\pm\!1.61$ & $0.5486\!\pm\!0.0202$ \\
Dijkstra INL & 8 & 3422 & 0.184 & $30.24\!\pm\!16.52$ & $73.00\!\pm\!1.71$ & $0.5577\!\pm\!0.0198$ \\
D-INL+rate & 8 & 3422 & 0.184 & $16.42\!\pm\!6.85$ & $73.20\!\pm\!2.05$ & $0.5539\!\pm\!0.0244$ \\
\bottomrule
\end{tabular}%
}
\vspace{-1mm}
\end{table}

\begin{figure*}[t]
\centering
\subfloat[Acc.--exchange.\label{fig:accbw}]{\includegraphics[width=0.48\linewidth]{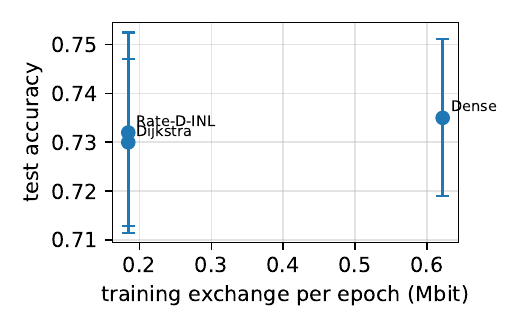}}
\hfill
\subfloat[Rate--NLL.\label{fig:raterisk}]{\includegraphics[width=0.48\linewidth]{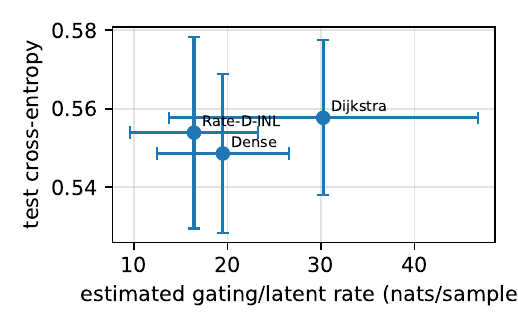}}
\caption{Synthetic D-INL trade-offs. Pruning removes $70.4\%$ of active training edges and finite-rate regularization moves the pruned model to a lower-rate operating point.}
\label{fig:results}
\vspace{-1mm}
\end{figure*}

Table~\ref{tab:synthetic_results} and Fig.~\ref{fig:results} show the intended networking trade-off. Dense INL obtains $73.50\%$ test accuracy at $0.622$ Mbit/epoch. Dijkstra pruning reduces the active-edge ratio to $8/27=0.296$ and the exchange to $0.184$ Mbit/epoch, while retaining essentially the same accuracy. Adding the finite-rate penalty decreases the estimated latent rate from $30.24$ to $16.42$ nats/sample compared with unregularized Dijkstra INL, a $45.7\%$ reduction, while maintaining similar accuracy and NLL. These small-scale results do not claim universal superiority; rather, they validate the mechanism that topology pruning and rate regularization can jointly improve the accuracy-per-communication operating point.

\subsection{Intuitions}

The experiment highlights three useful operating regimes. Dense INL is the reference point: it gives the largest set of gradient paths, but also the largest training exchange. Dijkstra INL removes most of these paths and therefore reduces communication immediately. Its estimated rate is higher in this run because the remaining paths carry more concentrated latent information through fewer bottlenecks. The rate-regularized variant keeps the same sparse topology but penalizes information in the bottlenecked messages, which reduces the estimated rate while preserving test accuracy within the statistical variation of the experiment.

The edge-removal mechanism is compatible with standard networking constraints. A controller can assign $C_e$, $\tau_e$, and $\rho_e$ from measurements such as available rate, queueing delay, and packet delivery ratio. The SPT then produces a deterministic, interpretable routing layer. Learning still adapts the node modules and the stochastic gates, but the communication graph remains connected from sources to the fusion node. This separation is attractive for networks in which routing decisions must be explainable or satisfy operational constraints imposed by the network controller.

The proposed method differs from pruning hidden neurons in a centralized network. Here, pruning changes the physical or logical communication pattern used during training and inference. It therefore affects delay, energy, synchronization, and privacy exposure, not only the number of floating-point operations. Compared with federated learning \cite{mcmahan2017}, the nodes do not repeatedly upload full model updates to a parameter server. Compared with split learning-style decompositions discussed in \cite{moldoveanu2023}, multiple distributed observations can still be fused inside the network. D-INL keeps this INL advantage while giving the operator a direct way to budget the training links.

The bound in \eqref{eq:bound} should be interpreted as a design guide rather than a tight numerical predictor. The term $\D_{\cT}(R_g)$ captures the price of topology and rate restriction, while $\sqrt{2\I(S;W_\cT)/m}$ captures the data dependence of the learned modules. In practice, validation loss, measured exchange, and KL-based rate proxies can be used as tractable surrogates for selecting $\lambda$, $\mu$, and the edge-cost weights.

There are two limitations. First, a single SPT may be too restrictive when several disjoint high-quality paths carry complementary information. In that case, the method can be extended to $k$ shortest paths or to a budgeted union of SPTs. Second, the present numerical study is intentionally small. It verifies the proposed mechanism, but larger wireless or vision benchmarks are needed before making claims about absolute accuracy gains over dense INL, federated learning, or split learning.

The supported claim is therefore a communication-efficiency claim rather than a universal accuracy-improvement claim: sparse shortest-path backpropagation can preserve accuracy while reducing training exchange, and finite-rate regularization can further control the latent information rate. This claim is directly supported by Table~\ref{tab:synthetic_results} and Fig.~\ref{fig:results}.

\section{Conclusion}
\label{conclu}

This letter introduced D-INL, a sparse INL method that uses a Dijkstra SPT to remove non-tree links from backpropagation and combines this pruning with finite-rate gating. The derived bound shows how topology-dependent distortion, finite-sample error, and $\I(S;W_\cT)$ jointly control risk. Synthetic results confirm that substantial training-exchange savings are possible with little accuracy loss. The next steps are to learn the edge costs jointly with the neural modules, to test the method under time-varying capacities and packet losses, and to benchmark it on wireless edge and distributed sensing datasets.

\bibliographystyle{IEEEtran}
\bibliography{ref}

\end{document}